\newtheorem{theorem}{Theorem}
\newtheorem{lemma}{Lemma}
\def\BibTeX{{\rm B\kern-.05em{\sc i\kern-.025em b}\kern-.08em
    T\kern-.1667em\lower.7ex\hbox{E}\kern-.125emX}}
\begin{document}

\title{Global Attitude Stabilization using Pseudo-Targets
% \thanks{Identify applicable funding agency here. If none, delete this.}
}

\author{Mahathi Bhargavapuri$^1$\thanks{M. Bhargavapuri and S. R. Sahoo are with the Dept. of Electrical Engineering, Indian Institute of Technology Kanpur, Kanpur, India. \texttt{email: \{mahathi,srsahoo\}@iitk.ac.in}}
% \IEEEauthorblockA{\textit{Dept. of Electrical Engineering} \\
% \textit{Indian Institute of Technology Kanpur}\\
% Kanpur, India \\
% mahathi@iitk.ac.in}
\and
Soumya Ranjan Sahoo$^1$
% \IEEEauthorblockA{\textit{Dept. of Electrical Engineering} \\
% \textit{Indian Institute of Technology Kanpur}\\
% Kanpur, India \\
% srsahoo@iitk.ac.in}
\and
Mangal Kothari$^2$\thanks{M. Kothari is with the Dept. of Aerospace Engineering, Indian Institute of Technology Kanpur, Kanpur, India. \texttt{email: mangal@iitk.ac.in}}}
% \IEEEauthorblockA{\textit{Dept. of Aerospace Engineering} \\
% \textit{Indian Institute of Technology Kanpur}\\
% Kanpur, India \\
% mangal@iitk.ac.in}

% \and
% \IEEEauthorblockN{4\textsuperscript{th} Given Name Surname}
% \IEEEauthorblockA{\textit{dept. name of organization (of Aff.)} \\
% \textit{name of organization (of Aff.)}\\
% City, Country \\
% email address}
% \and
% \IEEEauthorblockN{5\textsuperscript{th} Given Name Surname}
% \IEEEauthorblockA{\textit{dept. name of organization (of Aff.)} \\
% \textit{name of organization (of Aff.)}\\
% City, Country \\
% email address}
% \and
% \IEEEauthorblockN{6\textsuperscript{th} Given Name Surname}
% \IEEEauthorblockA{\textit{dept. name of organization (of Aff.)} \\
% \textit{name of organization (of Aff.)}\\
% City, Country \\
% email address}
% }

\maketitle

\begin{abstract}
The topological obstructions on the attitude space of a rigid body make global asymptotic stabilization impossible using continuous state-feedback. This paper presents novel algorithms to overcome such topological limitations and achieve arbitrary attitude maneuvers with only continuous, memory-less state-feedback. We first present nonlinear control laws using both rotation matrices and quaternions that give rise to one almost globally asymptotically stabilizable equilibrium along with a nowhere dense set of unstable equilibria. The unstable equilibria are uniquely identified in the attitude error space. Pseudo-targets are then designed to make the controller believe that the attitude error is within the region of attraction of the stable equilibrium. Further, the pseudo-target ensures that maximum control action is provided to push the closed-loop system toward the stable equilibrium. The proposed algorithms are validated using both numerical simulations and experiments to show their simplicity and effectiveness.
\end{abstract}

\begin{IEEEkeywords}
Aerospace, nonlinear control, attitude stabilization, global stabilization, control applications.
\end{IEEEkeywords}

\section{Introduction}
\subsection{Motivation and Background}
The attitude tracking problem \cite{wen1991attitude} has been studied for many decades since it plays an important role in the control of a wide class of mechanical systems with rotational degrees of freedom such as satellites, aircrafts, underwater vehicles, hovercrafts, and robots \cite{bullo2000stabilization,sanyal2009inertia,seo2008high,lee2010geometric,tayebi2006attitude,sanyal2011AUV,leonard1997stability,wen1992motion}. For systems such as these, one would ideally want to achieve the desired attitude from any given arbitrary initial condition. Even though the Lie group structure of the attitude space (and consequently, the attitude error space) allows transforming a trajectory tracking problem into the easier problem of stabilization of the identity element \cite{maithripala2006almost}, achieving arbitrary aggressive attitude maneuvers remains a sizable challenge. This is because the topological properties of the rigid body attitude space prohibit the existence of an equilibrium that is globally asymptotically stable using continuous, memory-less state-feedback \cite{bhat}.

The rigid body attitude space is a boundary-less, compact manifold, represented globally and uniquely by the group of rotation matrices denoted by SO(3). In other words, SO(3) is not a vector space and does not have the property of contractibility. Continuous state-feedback controllers can at best achieve \textit{almost} global asymptotic attitude stabilization \cite{sanyal2009inertia,lee2010geometric,sanyal2011AUV,sanyal2008almost,lee2012exponential,maithripala2006almost}. The control action is lost when the attitude error lies on the unstable two-dimensional submanifold with the angular velocity being zero. Intuitively, this means that the controller is inactive when the attitude error between the current and desired attitudes is 180$^\circ$. However, such controllers can be augmented to achieve arbitrary, aggressive attitude maneuvers even when the attitude error lies on the unstable submanifold. This paper proposes the idea of providing pseudo-targets to ensure control action goes to zero only when the system reaches the stable equilibrium point.

While continuous state-feedback controllers fail to achieve global attitude stabilization, discrete control laws that do succeed are not robust to small measurement noise as noted in \cite{mayhew2011quaternion}. Hybrid, memory-based controllers are designed in  \cite{mayhew2009robust,mayhew2011quaternion,MAYHEW20131945} which use quaternions for attitude parameterization. Although these works successfully achieve global stabilization with robustness to small measurement noise and also overcome the \textit{unwinding phenomenon} \cite{bhat}, the method used cannot be generalized easily to any existing and simple control laws. Moreover, the implementation and stability proof seem tedious, and results directly on SO(3) cannot be found in the literature to the best of the authors' knowledge.

The present work overcomes the drawbacks of existing continuous, discrete, and hybrid controllers, and proposes generalized algorithms that generate pseudo-targets to enable simple continuous state-feedback to achieve arbitrary attitude maneuvers. Lyapunov-based analysis is utilized to ensure the stability of the proposed scheme. The proposed algorithms are generalized in the sense that they can be applied to any existing continuous state-feedback law designed directly on SO(3) or using unit quaternions to achieve global asymptotic stabilization. This is demonstrated in Section \ref{SecPseudo} of this paper. 

\subsection{Contributions}
This work attempts to provide a generalized scheme to drive a rigid body to the desired attitude from any given arbitrary initial condition. Nonlinear control laws on SO(3) as well as unit quaternions are designed to almost globally asymptotically stabilize the identity element of the attitude error space. The main contributions of this work are listed as follows:
\begin{itemize}
    \item The points in the error space other than the identity element where the continuous state-feedback becomes inactive are uniquely identified. 
    \item A pseudo-error generator is designed to ensure that whenever the attitude error is close to the set of points identified above, an intermediate target attitude is generated to ensure the controller behaves as though the error is within its region of attraction.
    \item It is ensured that the largest possible control input is provided when the error between the current and desired attitudes is 180$^\circ$.
\end{itemize}
The proposed algorithms can be applied to a wide class of continuous state-feedback controllers and become indispensable when a rigid body, like a satellite, is commanded to stabilize at arbitrary attitudes while using singularity-free controllers. Numerical simulations and experimental validation on a low-cost platform are described for cases with and without the proposed algorithm. The comparison allows one to observe the improvement with respect to what is currently achievable through continuous, memory-less state-feedback controllers. 

\section{Mathematical Preliminaries} \label{MathPrel}

This section contains the mathematical basis required to express the main ideas of the paper with clarity. To accentuate the importance of this work in a generalized setting, we use both the quaternion- and rotation matrix-based attitude representations. A unit quaternion is a function of four parameters subject to one constraint. Rotation matrices, on the other hand, have nine parameters subject to six constraints. For a lucid exposition on singularity-free attitude representation using unit quaternions and rotation matrices, the reader can refer to \cite{parwana2017quaternions,bullo2004geometric}.

Let $q \in \mathcal{S}^3$ be an unit quaternion that represents the attitude of a rigid body in $\mathds{R}^3$ with respect to an inertial frame, where $\mathcal{S}^3 = \{ q \in \mathds{R}^4 ~|~ q^T q = 1 \}$ denotes the three-dimensional unit sphere in $\mathds{R}^{4}$. The quaternion $q$ is composed of its scalar and vector parts given by $q = \left[q_{0} ~q_v^T \right]^T$, with $q_{0} \in \mathds{R}$ and $q_v \in \mathds{R}^3$, respectively. The quaternion multiplication $\otimes$ for $p$, $q$ $\in \mathcal{S}^3$ is defined as
\begin{equation}
p \otimes q = \begin{bmatrix}
p_0 q_0 - p_v^T q_v \\
p_0 q_v + q_0 p_v + \widehat{p}_v q_v
\end{bmatrix},
\end{equation}
where for any two vectors $x$, $y \in \mathds{R}^3$, $\widehat{x}y = x \times y$ denotes the vector cross product. The \textit{hat} map, $\widehat{\cdot} : \mathds{R}^3 \rightarrow \mathfrak{so}\textnormal{(3)}$, for a vector $x = \left[ x_1 ~x_2 ~x_3 \right]^T$ is given by
\begin{equation}
\widehat{x} = \begin{bmatrix}
0 & -x_3 & x_2 \\
x_3 & 0 & -x_1 \\
-x_2 & x_1 & 0
\end{bmatrix}.
\end{equation}
The quaternion kinematic equation is given by
\begin{equation}
\dot{q} =  \frac{1}{2} q \otimes \textbf{w}, 
\label{qderivative}
\end{equation}
where $\textbf{w} = \left[ 0 ~\omega^T \right]^T$ and $\omega \in \mathds{R}^3$ is the angular velocity in the body frame.
Expressing rotations in terms of unit quaternions is, however, non-unique since $q$ and $-q$ represent the same rotation in $\mathds{R}^3$. The configuration space of rigid body attitude is expressed uniquely in terms of rotation matrices which satisfies the properties of a Lie group. This group is called the special orthogonal group and denoted by SO(3) = $ \{ R \in \mathds{R}^{3\times3} ~|~ R^T R = RR^T= I, ~\mathrm{det}(R) = 1 \}$. Here, the rotation matrix $R$ transforms vectors in the body frame to the inertial frame. There exists a map $\mathcal{R} : \mathcal{S}^3 \rightarrow$ SO(3) such that $\mathcal{R}(q) = \mathcal{R}(-q)$ holds. The rotation kinematics are given by
\begin{equation}
\dot{R} = R \widehat{\omega}, 
\label{Rdot} 
\end{equation}
and the rigid body attitude dynamics are given by
\begin{equation}
\textbf{J}\dot{\omega} = - \widehat{\omega} \textbf{J}\omega + \textbf{M},
\label{rotdyn}
\end{equation}
where $\textbf{J} \in \mathds{R}^{3 \times 3}$ is the positive definite inertia tensor in the body frame and $\textbf{M} \in \mathds{R}^3$ is the vector consisting of externally applied moments.

% \subsection*{Attitude Error and Error Dynamics}

Let $q_d$ denote the unit quaternion representing the desired rigid body attitude. The quaternion error is defined as
\begin{equation}
q_e = q_d^* \otimes q,
\end{equation}
where $q^* = \left[ q_0 ~-q_v^T \right]^T$ is the quaternion inverse. The configuration error on SO(3) for control design is chosen as a positive definite function $\Psi(R, R_d) : \textnormal{SO(3)} \times \textnormal{SO(3)} \rightarrow \mathds{R}$. An example for a candidate error function is 
\begin{equation}
\Psi(R, R_d) = \frac{1}{2} \mathrm{tr}\left\lbrace K \left( I - R_d^T R \right) \right\rbrace,\label{costrotmat}
\end{equation}
where $K = \textnormal{diag}(k_1, k_2, k_3)$ and $R_d$ is the desired attitude \cite{maithripala2006almost}. The eigenvalues of $K$ are chosen to be real, positive, and distinct to ensure $\Psi$ has only four critical points \cite{sanyal2009inertia,sahoo2014attitude}. 
The angular velocity error is defined as
\begin{equation}
e_\omega = \omega - R^T R_d \omega_d,
\label{omegaerr}
\end{equation}
where $\omega_d \in \mathds{R}^3$ is the desired angular velocity. The evolution of the errors $q_e$, $\Psi$, and $e_\omega$ with time is given by
\begin{subequations}
\begin{align}
\dot{q}_e &= \frac{1}{2}q_e \otimes \textbf{e}_\omega 
\label{quaterrdyn} \\
\dot{\Psi} &= e_R^T e_\omega\\
\textbf{J} \dot{e}_\omega &= -\widehat{\omega} \textbf{J} \omega + \textbf{M} + \textbf{J}\widehat{e}_\omega R^T R_d\omega_d - \textbf{J} R^T R_d \dot{\omega}_d,
\label{omegaerrdyn}
\end{align}
\end{subequations}
where $\textbf{e}_\omega = \left[ 0 ~e_\omega^T \right]^T$. The attitude error vector, $e_R \in \mathds{R}^3$, is defined using the variation of $\Psi$ with respect to $R$ which is given by
\begin{equation}
\begin{aligned}
\textbf{D}_R \Psi (R, R_d)\cdot \delta R &= -\frac{1}{2}\mathrm{tr}\left( K R_d^T R\widehat{\eta} \right) \\
% 	&= -\frac{1}{2}\mathrm{tr}\left( \frac{1}{2} \left( K R_d^T R - R^T R_d K\right) \widehat{\eta} \right) \\
    &= \frac{1}{2} \left( K R_d^T R - R^T R_d K\right)^\vee \cdot \eta \\
	&= e_R ^T \eta,
\end{aligned}
\label{psiVar}
\end{equation}
where $\delta R = \frac{d}{d \epsilon}\vert_{_{_{\epsilon=0}}} R ~\textnormal{exp} (\epsilon \widehat{\eta}) = R\widehat{\eta}$ is the infinitesimal variation in $R$, $\eta \in \mathds{R}^3$, and $\cdot^\vee : \mathfrak{so}\textnormal{(3)} \rightarrow \mathds{R}^3$ is the inverse of hat map, called the \textit{vee} map.   

\section{Global Attitude Stabilization using Pseudo-Targets} \label{SecPseudo}
The existence of multiple equilibria in the attitude error space makes global stabilization of its identity element using continuous controllers impossible. Essentially, the control action vanishes at multiple points including the identity element of the error space. Understanding this behaviour of the closed-loop system and characterization of the unstable equilibria forms the core motivation and basis to design pseudo-targets for global asymptotic stabilization. Hence, nonlinear control laws in terms of quaternions and rotation matrices that can achieve almost global stabilization are designed first. The following theorems discuss the Lyapunov-based control design.

\subsection{Almost Globally Stabilizing Control Laws}

Most quaternion-based controllers found in literature ignore the fundamental problem of double covering the attitude space \cite{wen1991attitude,fresk2013full,seo2008high,tayebi2008unit}. This can lead to the unwinding phenomenon and is highly undesirable. The following theorem provides a method to overcome this obstruction.
% Theorem \ref{Prop2} gives a nonlinear control law on SO(3) which ensures minimum number of closed-loop equilibria possible. 

\begin{theorem}
Given a smooth command $q_d(t)$, for positive constants $k_q$, $k_{\omega q} \in \mathds{R}$, the control law given by
\begin{equation}
\begin{aligned}
\textnormal{\textbf{M}} = - k_q q_{e0} q_{ev} - k_{\omega q} e_\omega + \widehat{\omega} \textnormal{\textbf{J}} \omega - \textnormal{\textbf{J}}\widehat{e}_\omega R^T R_d\omega_d \\ + \textnormal{\textbf{J}} R^T R_d \dot{\omega}_d,
\end{aligned}
\label{Moment_q}
\end{equation} 
almost globally asymptotically stabilizes identity element of the error space, $(q_e^{id}, ~e_\omega^{id}) \equiv \left( \left[ \pm 1 ~0_{3 \times 1}^T \right]^T, ~0_{3 \times 1} \right)$, and the region of attraction is given by $\mathcal{S}^3 \backslash \mathcal{A} \times \mathds{R}^3$, where $\mathcal{A} = \left\lbrace q_e \in \mathcal{S}^3 ~|~ q_{e0} = 0 \right\rbrace$. 
\label{Prop1}
\end{theorem}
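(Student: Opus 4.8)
The plan is to combine a Lyapunov argument with LaSalle's invariance principle. The single design choice that makes everything work is to pick a storage function that is \emph{even} in $q_{e0}$, so that the two quaternion lifts $[+1~0_{3\times1}^T]^T$ and $[-1~0_{3\times1}^T]^T$ of the identity rotation are scored identically; this is precisely what stops the controller from preferring one sign of $q_{e0}$ and thereby rules out unwinding.

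First I would substitute the control law \eqref{Moment_q} into the error dynamics \eqref{omegaerrdyn}. The last three terms of \eqref{Moment_q} are exactly the feedforward needed to cancel $-\widehat{\omega}\textbf{J}\omega$, $\textbf{J}\widehat{e}_\omega R^T R_d\omega_d$, and $-\textbf{J}R^T R_d\dot{\omega}_d$, collapsing the closed loop to
\[
\textbf{J}\dot{e}_\omega = -k_q q_{e0} q_{ev} - k_{\omega q} e_\omega.
\]
I would then propose the candidate
\[
V = \tfrac{1}{2} e_\omega^T \textbf{J} e_\omega + k_q\left(1 - q_{e0}^2\right),
\]
which, since $q_{ev}^T q_{ev} = 1 - q_{e0}^2$, is positive definite and vanishes exactly at the identity element $\left(\left[\pm 1~0_{3\times1}^T\right]^T, 0_{3\times1}\right)$. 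Expanding \eqref{quaterrdyn} gives $\dot{q}_{e0} = -\tfrac{1}{2}q_{ev}^T e_\omega$, hence $\tfrac{d}{dt}(1-q_{e0}^2) = q_{e0}q_{ev}^T e_\omega$. Feeding this and the reduced dynamics into $\dot{V}$, I expect the cross terms $\pm k_q q_{e0}q_{ev}^T e_\omega$ to cancel, leaving $\dot{V} = -k_{\omega q}\|e_\omega\|^2 \le 0$.

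With $\dot{V}$ negative semidefinite and the sublevel sets of $V$ compact, LaSalle's principle confines trajectories to the largest invariant set inside $\{\dot{V}=0\}=\{e_\omega=0\}$. On that set invariance forces $\dot{e}_\omega=0$, and the reduced dynamics then require $q_{e0}q_{ev}=0$; the equilibria are therefore exactly the identity ($q_{ev}=0$) together with the set $\mathcal{A}\times\{0_{3\times1}\}$ ($q_{e0}=0$). A strict-minimum argument on $V$, or a linearization of the closed loop in local coordinates on $\mathcal{S}^3$, then gives local asymptotic stability of the identity.

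The step I expect to be the main obstacle is proving that every equilibrium on $\mathcal{A}\times\{0_{3\times1}\}$ is unstable, which is what upgrades LaSalle's conclusion to \emph{almost} global stability and pins down the region of attraction. My plan is to note that $V=k_q$ on all of $\mathcal{A}\times\{0_{3\times1}\}$ while $V=0$ at the identity, so these undesired points sit at the maximum of the potential term; I would then either invoke Chetaev's instability theorem or linearize about a representative point $q_e=[0~q_{ev}^T]^T$ with $\|q_{ev}\|=1$, $e_\omega=0$, and exhibit an eigenvalue with positive real part. Since $\mathcal{A}$ is the two-dimensional sphere $\{q_{ev}^T q_{ev}=1\}$, it is nowhere dense and of measure zero, so its unstable manifold cannot attract an open set; every initial condition off $\mathcal{A}\times\mathds{R}^3$ must then flow to the identity, yielding the claimed region of attraction $\mathcal{S}^3\backslash\mathcal{A}\times\mathds{R}^3$. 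The delicate points here are the dimension count and the choice of coordinates for the linearization, since $\mathcal{S}^3$ carries no global vector-space structure.
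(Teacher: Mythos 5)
Your proposal follows the same route as the paper's own proof: the identical Lyapunov candidate $V_q = k_q\left(1-q_{e0}^2\right)+\frac{1}{2}e_\omega^T\textbf{J}e_\omega$, the same cancellation collapsing the closed loop to $\textbf{J}\dot{e}_\omega = -k_q q_{e0}q_{ev}-k_{\omega q}e_\omega$ with $\dot{V}_q=-k_{\omega q}e_\omega^Te_\omega$, and LaSalle's invariance principle to locate the equilibrium set. You actually attempt more than the paper does: the paper stops at asserting that $\mathcal{A}$ is a two-dimensional submanifold ``from which the identity cannot be stabilized,'' whereas you sketch a Chetaev/linearization proof that the equilibria in $\mathcal{A}\times\{0_{3\times1}\}$ are unstable --- which is precisely the ingredient a rigorous ``almost global'' claim needs.

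However, your final inference has a genuine gap (one the paper's proof shares, so be aware you cannot repair it without weakening the claim). From instability of the equilibria in $\mathcal{A}\times\{0_{3\times1}\}$ plus the fact that $\mathcal{A}$ has measure zero, it does \emph{not} follow that ``every initial condition off $\mathcal{A}\times\mathds{R}^3$ must flow to the identity.'' The linearization at any point of $\mathcal{A}\times\{0_{3\times1}\}$ is a saddle: in the variables $\left(q_{e0},\,q_{ev}^Te_\omega\right)$ (taking $\textbf{J}=jI$ for clarity) the system matrix is
\begin{equation*}
\begin{pmatrix} 0 & -\tfrac{1}{2} \\ -\tfrac{k_q}{j} & -\tfrac{k_{\omega q}}{j} \end{pmatrix},
\end{equation*}
whose determinant is negative, so each such equilibrium carries a nontrivial \emph{stable} manifold, and its stable eigenvector has a nonzero $q_{e0}$-component. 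Consequently the stable set of $\mathcal{A}\times\{0_{3\times1}\}$ is not contained in $\mathcal{A}\times\mathds{R}^3$; indeed $\mathcal{A}\times\mathds{R}^3$ is not even invariant, since $\dot{q}_{e0}=-\frac{1}{2}q_{ev}^Te_\omega\neq 0$ there whenever $q_{ev}^Te_\omega\neq0$. So there exist initial conditions with $q_{e0}(0)\neq0$ that converge to the unstable equilibria rather than to the identity, and $\left(\mathcal{S}^3\backslash\mathcal{A}\right)\times\mathds{R}^3$ is not literally the region of attraction. What your stable-manifold argument genuinely delivers is that the non-converging set is a lower-dimensional, nowhere dense, measure-zero invariant set --- i.e., almost global asymptotic stability, which is the defensible part of the theorem. (Two smaller slips: the set of points attracted to an equilibrium is its stable manifold, not its ``unstable manifold'' as you wrote; and measure-zero-ness of $\mathcal{A}$ itself is not what bounds the size of that attracted set --- the codimension of the stable manifold is.)
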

\begin{proof}
Consider the following Lyapunov function candidate
\begin{equation}
V_q = k_q \left( 1 - q_{e0}^2 \right) + \frac{1}{2} e_\omega ^T \textnormal{\textbf{J}} e_\omega.
\end{equation}
The function $V_q$ is positive definite about the identity element $(q_e^{id}, ~e_\omega^{id}) \equiv \left( \left[ \pm 1 ~0_{3 \times 1}^T \right]^T, ~0_{3 \times 1} \right)$. The time derivative of $V_q$ is given by
\begin{equation}
\dot{V}_q = k_q q_{e0} q_{ev}^T e_\omega + e_\omega^T \textbf{J} \dot{e}_\omega
\label{B2}
\end{equation}
Substituting \eqref{omegaerrdyn} in \eqref{B2}, the expression for $\dot{V}_q$ can be written as
\begin{equation}
\begin{aligned}
\dot{V}_q = e_\omega^T ( k_q q_{e0} q_{ev} - \widehat{\omega} \textbf{J} \omega + \textbf{M} + \textbf{J}\widehat{e}_\omega R_e^T\omega_d - \textbf{J} R_e^T \dot{\omega}_d ).
\end{aligned}
\label{vdot_quat}
\end{equation}
Further, substituting \eqref{Moment_q} in \eqref{vdot_quat} results in 
\begin{equation}
\begin{aligned}
\hspace{1cm} \dot{V}_q = - k_{\omega q} e_\omega^T e_\omega \leq 0, \hspace{1cm}
\end{aligned}
\end{equation}
and the closed-loop system becomes
\begin{equation}
% \hspace{1cm}
\textbf{J}\dot{e}_\omega = - k_{\omega q} e_\omega - k_q q_{e0} q_{ev}.
\label{CL_q_dyn}
\end{equation}
Note that the set of equilibria for the closed-loop system in \eqref{CL_q_dyn} is given by $\left\lbrace (q_e^{id}, ~e_\omega^{id})\right\rbrace \cup (\mathcal{A} \times \left\lbrace e_\omega^{id}\right\rbrace ) $. The existence of multiple equilibria implies global stability is non-viable. The next best possible scenario is to ensure that the desired equilibrium, $(q_e^{id}, ~e_\omega^{id})$, is \textit{almost} globally asymptotically stable. We have $\mathcal{M} = \left\lbrace (q_e^{id}, ~e_\omega^{id}) \right\rbrace$ as the largest invariant set in $\mathcal{S}^3 \backslash \mathcal{A} \times \mathds{R}^3$. Asymptotic stabilization of the equilibrium $(q_e^{id}, ~e_\omega^{id})$ in $\mathcal{S}^3 \backslash \mathcal{A} \times \mathds{R}^3$ using the control law in \eqref{Moment_q} follows from \textit{LaSalle's Invariance Principle}. The set $\mathcal{A}$ forms a two-dimensional submanifold from which $(q_e^{id}, ~e_\omega^{id})$ cannot be stabilized, and hence the control law in \eqref{Moment_q} can achieve almost global asymptotic stability. 
\end{proof}

For the generation of pseudo-targets on SO(3), it is essential that the controller ensures minimum possible closed-loop equilibria. The following theorem provides one such control law which has been derived using the results for a general class of systems evolving on Lie groups \cite{maithripala2006almost}. 

\begin{lemma}
Given a smooth command $R_d(t)$, for positive constants $k_R$, $k_{\omega_R} \in \mathds{R}$, the control law given by
\begin{equation}
\begin{aligned}
\textnormal{\textbf{M}} = - k_R e_R - k_{\omega_R} e_\omega + \widehat{\omega} \textnormal{\textbf{J}} \omega - \textnormal{\textbf{J}} \widehat{e}_\omega R^T R_d\omega_d + \textnormal{\textbf{J}} R^T R_d \dot{\omega}_d,
\end{aligned}
\label{Moment_R}
\end{equation} 
almost globally asymptotically stabilizes identity element of the error space, $(e_R^{id}, ~e_\omega^{id}) \equiv \left( 0_{3 \times 1}, ~0_{3 \times 1} \right)$, and the region of attraction is given by $\textnormal{SO(3)}\backslash\mathcal{Y} \times \mathds{R}^3$, where $\mathcal{Y} = \left\lbrace \textnormal{diag}(1, -1, -1), ~\textnormal{diag}(-1, 1, -1), ~\textnormal{diag}(-1, -1, 1) \right\rbrace$. 
\label{Prop2}
\end{lemma}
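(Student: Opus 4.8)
The plan is to mirror the Lyapunov construction used for Theorem~\ref{Prop1}, but with the configuration error $\Psi$ of \eqref{costrotmat} replacing the quaternion storage term. First I would take the candidate
\begin{equation}
V_R = k_R \Psi(R, R_d) + \frac{1}{2} e_\omega^T \textnormal{\textbf{J}} e_\omega,
\end{equation}
which is positive definite about $(e_R^{id}, e_\omega^{id}) = (0_{3\times 1}, 0_{3\times 1})$ since $\Psi$ is a positive definite configuration error and $\textnormal{\textbf{J}} \succ 0$. Differentiating along the closed-loop trajectories and using $\dot{\Psi} = e_R^T e_\omega$ together with \eqref{omegaerrdyn} gives
\begin{equation}
\dot{V}_R = e_\omega^T \left( k_R e_R - \widehat{\omega} \textnormal{\textbf{J}} \omega + \textnormal{\textbf{M}} + \textnormal{\textbf{J}} \widehat{e}_\omega R_e^T \omega_d - \textnormal{\textbf{J}} R_e^T \dot{\omega}_d \right),
\end{equation}
where $R_e = R_d^T R$. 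Substituting \eqref{Moment_R} is expected to cancel every feedforward term together with the $k_R e_R$ coupling, leaving $\dot{V}_R = -k_{\omega_R} e_\omega^T e_\omega \le 0$ and reducing the closed loop to $\textnormal{\textbf{J}} \dot{e}_\omega = -k_{\omega_R} e_\omega - k_R e_R$, exactly as in the proof of Theorem~\ref{Prop1}.

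Next I would apply LaSalle's invariance principle. The set where $\dot{V}_R = 0$ is precisely $\{ e_\omega = 0 \}$; on any invariant subset of it one also needs $\dot{e}_\omega = 0$, which by the reduced closed-loop equation forces $e_R = 0$. Thus the largest invariant set consists of the critical points of $\Psi$ paired with zero angular velocity. The decisive step, and the one I expect to be the main obstacle, is to characterise the solution set of $e_R = 0$ exactly. From \eqref{psiVar} we have $e_R = \tfrac{1}{2}(K R_e - R_e^T K)^\vee$, so $e_R = 0$ is equivalent to $K R_e$ being symmetric. Using the hypothesis that $K = \textnormal{diag}(k_1, k_2, k_3)$ has distinct positive eigenvalues, I would show that this condition, together with $R_e \in \textnormal{SO(3)}$, admits exactly four solutions: the identity $R_e = I$ and the three rotations by $\pi$ about the principal axes, $\textnormal{diag}(1, -1, -1)$, $\textnormal{diag}(-1, 1, -1)$, $\textnormal{diag}(-1, -1, 1)$, which are exactly the elements of $\mathcal{Y}$.

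Finally I would separate the stable equilibrium from the unstable ones. Evaluating $\Psi$ at the three members of $\mathcal{Y}$ yields the strictly positive critical values $k_2 + k_3$, $k_1 + k_3$, and $k_1 + k_2$, so $R_e = I$ is the unique global minimum of the Morse function $\Psi$ while the elements of $\mathcal{Y}$ are a maximum and two saddle points. Their instability would then be certified by exhibiting a direction along which $V_R$ decreases at each such point (equivalently, an indefinite Hessian of $\Psi$ there), so that their stable manifolds are lower-dimensional, measure-zero subsets of $\textnormal{SO(3)} \times \mathds{R}^3$. Combining this with LaSalle gives convergence to $(e_R^{id}, e_\omega^{id})$ from every initial state in $\textnormal{SO(3)} \backslash \mathcal{Y} \times \mathds{R}^3$, i.e.\ almost global asymptotic stability; the distinct-eigenvalue assumption on $K$ is what guarantees the minimal count of four equilibria and hence the sharpest possible unstable set $\mathcal{Y}$.
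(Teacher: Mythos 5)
Your proposal is correct and follows essentially the same route as the paper's proof: the same Lyapunov function $V_R = k_R \Psi + \tfrac{1}{2} e_\omega^T \textbf{J} e_\omega$, the same cancellation yielding $\dot{V}_R = -k_{\omega_R} e_\omega^T e_\omega \le 0$ and the reduced closed loop $\textbf{J}\dot{e}_\omega = -k_{\omega_R} e_\omega - k_R e_R$, followed by LaSalle's invariance principle and the four-critical-point structure of $\Psi$. The only difference is one of completeness: the paper outsources to cited references the facts you derive explicitly, namely that $e_R = 0$ forces $K R_e$ to be symmetric, hence $R_e \in \{I\} \cup \mathcal{Y}$ when $K$ has distinct positive eigenvalues, and that the elements of $\mathcal{Y}$ are unstable critical points whose stable manifolds are lower-dimensional.
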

\begin{proof}
Consider the Lyapunov function candidate \begin{equation}
V_R = k_R \Psi(R_e) + \frac{1}{2}e_\omega ^T \textnormal{\textbf{J}} e_\omega,
\end{equation}
whose time derivative is
\begin{equation}
\dot{V}_R = k_R e_R^T e_\omega + e_\omega^T \left( - \widehat{\omega} \textbf{J} \omega + \textbf{M} + \textbf{J}\widehat{e}_\omega R_e^T\omega_d - \textbf{J} R_e^T \dot{\omega}_d \right).
\end{equation}
From the control law in \eqref{Moment_R}, we obtain $\dot{V}_R = - k_{\omega_R} e_\omega^T e_\omega \leq 0$.
% \begin{equation*}
% \dot{V}_R = - k_{\omega_R} e_\omega^T e_\omega \leq 0.
% \end{equation*}
The closed-loop system thus becomes
\begin{equation}
\textbf{J}\dot{e}_\omega = - k_{\omega_R} e_\omega - k_R e_R.
\label{CL_R_dyn}
\end{equation}
The largest invariant set in $\textnormal{SO(3)} \backslash \mathcal{Y} \times\mathds{R}^3$ contains only the equilibrium, $(e_R^{id}, ~e_\omega^{id}) \equiv \left( 0_{3 \times 1}, ~0_{3 \times 1} \right)$. From \textit{LaSalle's Invariance Principle}, the equilibrium $(e_R^{id}, ~e_\omega^{id})$ is asymptotically stable within $\textnormal{SO(3)} \backslash \mathcal{Y} \times \mathds{R}^3$. The only remaining critical points of $\Psi(R_e)$ lie in $\mathcal{Y}$ and form a nowhere dense set. 
% These points form the set $\mathcal{A}_R = \left\lbrace R_e \in \textnormal{SO}(3) ~|~ \Psi(R_e) = 2 \right\rbrace $, which is analogous to $\mathcal{A}$. 
Hence, the control law in \eqref{Moment_R} almost globally asymptotically stabilizes the equilibrium $(e_R^{id}, ~e_\omega^{id})$. 
% \ref{App_B}. 
% Similar control laws have been derived in \cite{lee2010geometric,lee2012exponential} but do not ensure the minimum possible equilibria on SO(3). 
% A more generalized stability proof for mechanical systems evolving on Lie groups is given in \cite{maithripala2006almost}. 
\end{proof}
% Similar control laws have been derived in \cite{lee2010geometric,lee2012exponential} but do not ensure the minimum possible equilibria on SO(3). 
\begin{figure}[ht]
\begin{center}
\includegraphics[width=0.44\textwidth]{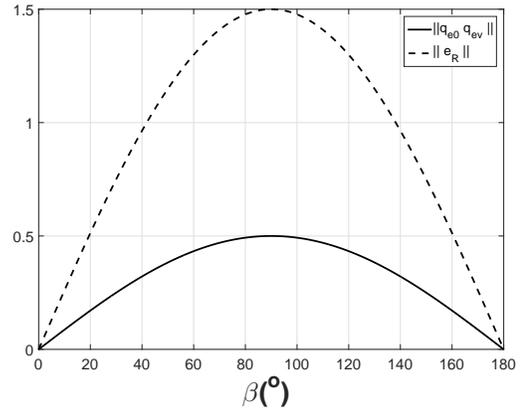} % The printed column  
\caption{Variation of $\Vert q_{e0}q_{ev} \Vert$ and $\Vert e_R \Vert $ with error in angle $\beta$ varying from 0$^\circ$ to 180$^\circ$ about the body axis $e_3 = \left[ 0 ~0 ~1 \right]^T$, $K = \textnormal{diag}\left( 1, 2, 3 \right)$.}
\label{error_norm}                                    
\end{center}                                	  	  
\end{figure}
From \eqref{Moment_q} and \eqref{Moment_R}, it is clear that the control action depends directly on $\Vert q_{e0}q_{ev}\Vert$ and $\Vert e_R \Vert$, where $\Vert \cdot \Vert$ is the 2-norm defined on $\mathds{R}^3$. When the attitude error is close to 180$^\circ$ about some axis of rotation, and assuming $\omega = \omega_d = 0$ for stabilization, the control actions vanish to zero. This is shown in Fig. \ref{error_norm}. This means that the control effort is negligible at large attitude errors, which is undesirable. 

% The idea of providing a pseudo-target attitude when the configuration error lies in $\mathcal{A}$ uses the fact that the elements in $\mathcal{A}$ can be uniquely identified. In this section, generation of pseudo-targets in terms of both quaternions and rotation matrices is discussed.   

\subsection{Quaternion-based Pseudo-Target}
% Define a set $\mathcal{Q}= \left\lbrace q_e \in \mathcal{S}^3 ~|~ \vert q_{e0} \vert < \varepsilon \right\rbrace $ where $\varepsilon \in \mathds{R}$ is arbitrarily small. This implies $\Vert q_{ev} \Vert \approx 1$  
For any error quaternion in $\mathcal{A}$, the statement $\Vert q_{ev} \Vert = 1$ always holds due to the property of unit quaternions. Whenever $q_e$ lies in $\mathcal{A}$, an intermediate element is generated such that $q_e^{int} = \left[\pm 1 ~q_{ev}^T  \right]^T$. Note that $q_e^{int} \notin \mathcal{S}^3$. The pseudo-error quaternion, $q_e^{pseudo}$, is computed as
\begin{equation}
q_e^{pseudo} = \frac{q_e^{int}}{\Vert q_e^{int} \Vert}.
\label{qPseudo}
\end{equation}
This ensures that $ \Vert q_{e0}^{pseudo} q_{ev}^{pseudo} \Vert = 0.5$ always and provides maximum proportional action (see Fig. \ref{error_norm}). 

\begin{algorithm}
\caption{Pseudo-error quaternion generation}\label{alg:quat}
\begin{algorithmic}[1]
\Function{Error quaternion}{$q_d,q$}
% \Comment{The g.c.d. of a and b}
\State $q_e\gets q_d^{*} \otimes q$ \Comment{The original error quaternion}
\While{$ q_{e} \in \mathcal{Q} $}	
\State $q_{e0}\gets \pm 1$ \Comment{Forcing the scalar part $q_{e0}$ to $\pm 1$}
\State Generate $q_e^{int} = \left[ q_{e0} ~q_{ev}^T \right]^T$ 
\State $q_e^{pseudo} \gets \eqref{qPseudo}$
\State $q_e \gets q_e^{pseudo}$
\EndWhile
\State \textbf{return} $q_e$\Comment{$q_e$ to be used by the controller}
\EndFunction
\end{algorithmic}
\end{algorithm}

The same logic can be extended to a small region around $q_{e0} = 0$ to improve proportional action of the controller. Define a set $\mathcal{Q}= \left\lbrace q_e \in \mathcal{S}^3 ~|~ \vert q_{e0} \vert < \varepsilon \right\rbrace $ where $\varepsilon \in \mathds{R}$ is small and positive. Modifying the algorithm to include any $q_e \in \mathcal{Q}$, a pseudo-code for generating the pseudo-error quaternion is shown in Algorithm \ref{alg:quat}. Note that the direction in which the pseudo-target is generated is irrelevant since the desired rotation is already close to 180$^\circ$.

\subsection{Pseudo-Targets on SO(3)}

Creating pseudo-targets on SO(3) is not as straightforward as using quaternions. Unlike quaternions, where the axis of rotation is $q_{ev}$ when error is 180$^\circ$ and the pseudo-error is weighted accordingly, the configuration error $\Psi$ plays an important role in identifying the axis about which the error is close to 180$^\circ$. Choosing $\Psi$ as in \eqref{costrotmat} ensures the minimum number of critical points of such an error function on SO(3). Let $e_1 = \left[ 1 ~0 ~0 \right]^T$, $e_2 = \left[ 0 ~1 ~0 \right]^T$, and $e_3 = \left[ 0 ~0 ~1 \right]^T$ be the body frame axes and let $R_e = R_d^T R$. The error matrix, $^{1}R_e = \textnormal{diag}(1, -1, -1)$, corresponds to an error of 180$^\circ$ about $e_1$. Similarly, $^2R_e = \textnormal{diag}(-1, 1, -1)$ and $ ^3R_e = \textnormal{diag}(-1, -1, 1)$ correspond to errors about $e_2$ and $e_3$, respectively. Corresponding to each of these unstable equilibria, $\Psi$ assumes a unique value that depends on the elements of $K$. Specifically, $\Psi = k_2 + k_3$ for $R_e = $ $^1R_e$, $\Psi = k_1 + k_3$ for $R_e =$ $^2R_e$, and $\Psi = k_1 + k_2$ for $R_e = $ $^3R_e$. Since $\Vert e_R \Vert$ reaches its maximum when the error is $\pm90^\circ$ about the axis of rotation (see Fig. \ref{error_norm}), the pseudo-error rotation matrix, $R_e^{pseudo}$, is chosen accordingly. Hence, for $R_e = $ $^1R_e$, we have
\begin{equation}
R_e^{pseudo} = \begin{bmatrix}
1 & 0 & 0 \\
0 & 0 & -1 \\
0 & 1 & 0
\end{bmatrix}.
\label{R-pseudo1}
\end{equation}
In a similar fashion, $R_e^{pseudo}$ for the remaining two critical points is chosen such that it corresponds to 90$^\circ$ error about the body axes $e_2$ and $e_3$.

\begin{algorithm}
\caption{Pseudo-error rotation matrix generation}\label{alg:SO3}
\begin{algorithmic}[1]
\Function{Error vector $e_R$}{$R_d,R$}
% \Comment{The g.c.d. of a and b}
\State $\Psi(R_e)\gets$ \eqref{costrotmat} \Comment{Configuration error on SO(3)}
\State $e_R \gets \frac{1}{2} \left( KR_e - R_e^TK^T \right)^\vee $
\If{$ R_{e} \in \bar{\mathcal{S}}_1 $}	
\State $R_e^{pseudo} \gets \eqref{R-pseudo1}$
\State Compute intermediate $e_R$
\EndIf
\If{$ R_{e} \in \bar{\mathcal{S}}_2 $}	
\State $R_e^{pseudo} \gets R_e$ given by 90$^\circ$ error about $e_2$
\State Compute intermediate $e_R$
\EndIf
\If{$ R_{e} \in \bar{\mathcal{S}}_3 $}	
\State $R_e^{pseudo} \gets R_e$ given by 90$^\circ$ error about $e_3$
\State Compute intermediate $e_R$
\EndIf
\State \textbf{return} $e_R$\Comment{$e_R$ to be used by the controller}
\EndFunction
\end{algorithmic}
\end{algorithm}

Improvement in proportional action is achieved by defining three sets corresponding to the three critical points. For errors close to $^1R_e$, we define $\bar{\mathcal{S}}_1 = \left\lbrace R_e \in \textnormal{SO(3)} ~\vert ~| \Psi(R_e) - (k_2 + k_3) | < \varepsilon \right\rbrace$. The sets $\bar{\mathcal{S}}_2 = \left\lbrace R_e \in \textnormal{SO(3)} ~\vert ~| \Psi(R_e) - (k_1 + k_3) | < \varepsilon \right\rbrace$ and $\bar{\mathcal{S}}_3 = \left\lbrace R_e \in \textnormal{SO(3)} ~\vert ~| \Psi(R_e) - (k_1 + k_2) | < \varepsilon \right\rbrace$ are analogously defined for errors close to $^2R_e$ and $^3R_e$, respectively. The pseudo-code for generating $R_e^{pseudo}$ is given in Algorithm \ref{alg:SO3}.

\section{Results} \label{Res}
Numerical simulations in MATLAB as well as experimental validation of the proposed scheme are shown in this section. Since the stability properties of the nonlinear controller are untouched, the response is smooth. 

For the purpose of simulations, the inertia matrix is taken to be $\textbf{J} = \textnormal{diag}\left( 0.0125, 0.0125, 0.025 \right)$. The gains in \eqref{Moment_q} were chosen to be $k_q = 10$ and $k_{\omega q} = 1.5$. The gains in \eqref{Moment_R} were taken to be $k_R = 5$ and $k_{\omega_R} = 2.1$ with $K = \textnormal{diag}\left( 1, 2, 3 \right)$. A large attitude maneuver of 0$^\circ$ to 180$^\circ$ is attempted about $e_3$ with $\varepsilon = 0.01$. For the stabilization case, we have $\omega = \omega_d = 0$. In terms of quaternions, it is represented as a change from an initial state of $q = \left[ \pm 1 ~0 ~0 ~0\right]^T$ to the desired state $q_d = \left[ 0 ~0 ~0 ~\pm1\right]^T$. 
% We see from Fig. \ref{quat_wo_pseudo} that, without any pseudo-target, the controller is not able to stabilize the desired equilibrium. 
% The response in the presence of noise shows that the closed-loop system takes an arbitrarily long time to converge to the desired attitude (see Fig. \ref{quat_noise_no_pseudo}). 
\begin{figure}[ht]
\begin{center}
\includegraphics[width=0.48\textwidth]{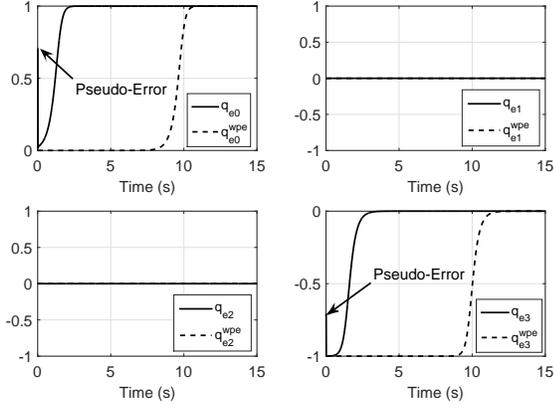}    % The printed column  
\caption{Quaternion error components when $ q_e \in \mathcal{A}$. $q_e$ denotes the error quaternion when pseudo-error is used. $q_e^{wpe}$ stands for quaternion error without implementing the pseudo-error generation algorithm.}									   % width is 8.4 cm.
\label{quat_w_pseudo}                                 % Size the figures 
\end{center}                                 % accordingly.
\end{figure}

Stabilization of the quaternion error to $q_e^{id}$ with and without pseudo-error is shown in Fig. \ref{quat_w_pseudo}. It is observed that in the presence of small measurement noise, stabilization may take arbitrarily long time, in this case, $11s$. This is expected as the vector field resulting from \eqref{Moment_q} vanishes when the error is close to the set $\mathcal{A}$. This time can be random and depends on the magnitude of noise. Using pseudo-errors it is ensured that closed-loop response of the system is similar to that when $q_e \in \mathcal{S}\backslash \mathcal{A}$.

% \begin{figure}[h]
% \begin{center}
% \includegraphics[width=0.47\textwidth]{Images/quat_noiseless_NO_pseudo_target.eps}    % The printed column  
% \caption{Quaternion component trajectories when the error quaternion lies in $\mathcal{A}$ and pseudo-target is \textit{not} used.}									   % width is 8.4 cm.
% \label{quat_wo_pseudo}                                 % Size the figures 
% \end{center}                                 % accordingly.
% \end{figure}

% \begin{figure}[h]
% \begin{center}
% \includegraphics[width=0.48\textwidth]{Images/quat_with_noise_NO_pseudo_target.eps}    % The printed column  
% \caption{Quaternion component trajectories when the error quaternion lies in $\mathcal{Q}$ and pseudo-target is \textit{not} used. Convergence to the desired attitude takes an arbitrarily long time.}									   % width is 8.4 cm.
% \label{quat_noise_no_pseudo}                                 % Size the figures 
% \end{center}                                 % accordingly.
% \end{figure}

% \begin{figure}[ht]
% \begin{center}
% \includegraphics[width=0.91\textwidth]{Images/quat_withANDwithout_pseudo_target.eps}    % The printed column  
% \caption{Quaternion error components when $ q_e \in \mathcal{A}$. $q_e$ denotes the error quaternion when pseudo-error is used. $q_e^{wpe}$ stands for quaternion error without pseudo-error.}									   % width is 8.4 cm.
% \label{quat_w_pseudo}                                 % Size the figures 
% \end{center}                                 % accordingly.
% \end{figure}

\begin{figure}[ht]
\begin{center}
\includegraphics[width=0.48\textwidth]{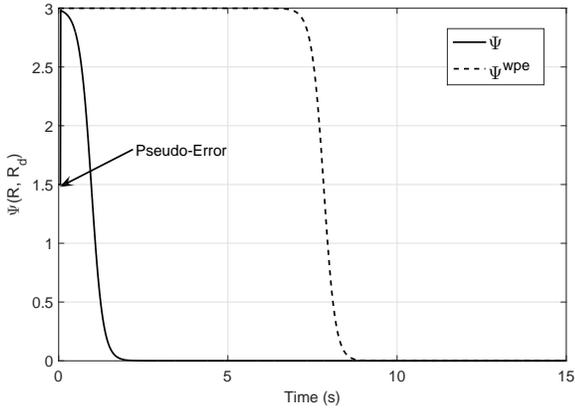}    % The printed column  
\caption{Variation of $\Psi(R, R_d)$ with time when the error is 180$^\circ$ about the body axis $e_3$. $\Psi$ denotes the configuration error with pseudo-error and $\Psi^{wpe}$ is the configuration error without implementing the pseudo-error generation algorithm.}									   % width is 8.4 cm.
\label{Psi_w_pseudo}                                 % Size the figures 
\end{center}                                 % accordingly.
\end{figure}

\begin{figure}[ht]
\begin{center}
\includegraphics[width=0.48\textwidth]{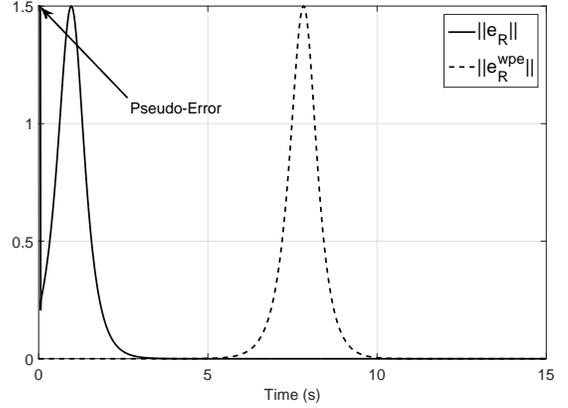}    % The printed column  
\caption{Variation of $ \Vert e_R \Vert $  and $ \Vert e_R^{wpe} \Vert $ with time when the error is 180$^\circ$ about the body axis $e_3$. $e_R$ denotes the rotation error when pseudo-error is used. $qe_R^{wpe}$ stands for rotation error without implementing the pseudo-error generation algorithm.}									   % width is 8.4 cm.
\label{eR_w_pseudo}                                 % Size the figures 
\end{center}                                 % accordingly.
\end{figure}

Numerical simulations for the above mentioned attitude maneuver is shown using the control law in \eqref{Moment_R}. The initial and desired attitude of the rigid body are $R = I$ and $R_d = \textnormal{diag}\left( -1,-1,1\right)$, respectively. The variation of $\Psi$ and $\Vert e_R \Vert$ with and without pseudo-error is shown in Fig. \ref{Psi_w_pseudo} and Fig. \ref{eR_w_pseudo}, respectively. Observe that the configuration error goes to zero after an arbitrarily long time when pseudo-error is not used.

Experiments were performed to validate the proposed method using a miniature unmanned aerial vehicle with parameters identical to those used for simulations. To ensure brevity, results for the case of quaternions are presented. A flip-switch is used to provide the desired attitude at $t=12s$ with pseudo-error and at $t=18s$ while pseudo-error is disabled. The results in Fig. \ref{quat_expt_w_pseudo} are analogous to those in Fig. \ref{quat_w_pseudo}. Video of the experiment can be found at \textbf{https://youtu.be/YwthUIjQqiI} .

\begin{figure}[ht]
\begin{center}
\includegraphics[width=0.48\textwidth]{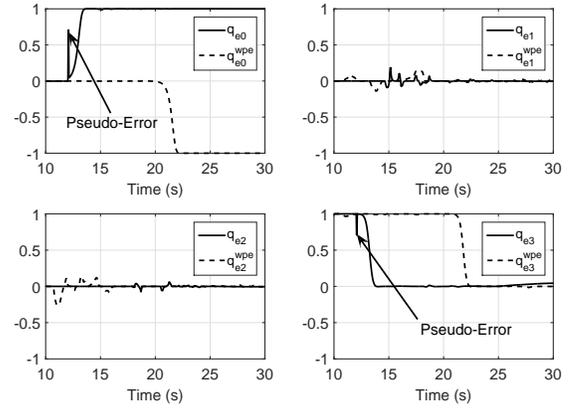}    % The printed column  
\caption{Experimental results for a 180$^\circ$ error about the body axis $e_3$ with and without pseudo-error. $q_e$ denotes the error quaternion when pseudo-error is used. $q_e^{wpe}$ stands for quaternion error without implementing the pseudo-error generation algorithm.}									   % width is 8.4 cm.
\label{quat_expt_w_pseudo}                                 % Size the figures 
\end{center}                                 % accordingly.
\end{figure}

\section{Conclusion} \label{Conclusions}
The idea of providing a pseudo-target whenever the attitude error is 180$^\circ$ about any arbitrary axis of rotation is explored. Such errors are uniquely identified in terms of error quaternions and configuration error functions on SO(3). Almost globally stabilizing continuous control laws were examined and their inaction at the critical points of the configuration space was shown. Algorithms to generate pseudo-errors were presented to overcome topological obstructions in the attitude space. Numerical simulations that show the effectiveness of the proposed scheme were discussed along with experimental validation. 

\section*{Acknowledgment}

The authors would like to thank Venkata Ramana Makkapati for his invaluable suggestions during the preparation of this manuscript.

\bibliographystyle{IEEEtran}
\bibliography{autosam}

\end{document}